\documentclass[conference]{IEEEtran}
\IEEEoverridecommandlockouts
\usepackage{cite}
\usepackage{amsmath,amssymb,amsfonts,amsthm}
\usepackage{algorithmic}
\usepackage{graphicx}
\usepackage{textcomp}
\usepackage{xcolor}
\allowdisplaybreaks
\usepackage{balance}
\def\BibTeX{{\rm B\kern-.05em{\sc i\kern-.025em b}\kern-.08em
    T\kern-.1667em\lower.7ex\hbox{E}\kern-.125emX}}
    \newtheorem{lemma}{Lemma}
    
    \newtheorem*{remark}{Remark}

\begin{document}

\title{On the R\'{e}nyi Cross-Entropy$^*$
\thanks{\textsuperscript{*}This work was supported in part by NSERC of Canada.}
}

\author{\IEEEauthorblockN{Ferenc Cole Thierrin, Fady Alajaji, and Tam\'{a}s Linder} \smallskip
\IEEEauthorblockA{\small Department of Mathematics and Statistics \\ Queen's University \\Kingston, ON K7L 3N6, Canada \\ 
Emails: \{14fngt, fa, tamas.linder\}@queensu.ca}
}
\maketitle

\begin{abstract}
The R\'{e}nyi cross-entropy measure between two distributions, a generalization of the Shannon cross-entropy, was recently used as a loss function for the improved design of deep learning generative adversarial networks. In this work, we examine the properties of this measure and derive closed-form expressions for it when one of the distributions is fixed and when both distributions belong to the exponential family. We also analytically determine a formula for the cross-entropy rate for stationary Gaussian processes and for finite-alphabet Markov sources.
\end{abstract}

\begin{IEEEkeywords}
R\'{e}nyi information measures, cross-entropy, exponential family distributions, Gaussian processes, Markov sources. 
\end{IEEEkeywords}

\section{Introduction}
The R\'{e}nyi entropy \cite{renyi} of order $\alpha$ of a discrete distribution (probability mass function) $p$ with finite support $\mathbb{S}$, defined as $$H_\alpha(p)=\frac{1}{1-\alpha}\ln \sum_{x\in\mathbb{S}}p(x)^\alpha $$ for $\alpha>0,\alpha\neq 1$, is a generalization of the Shannon entropy,\footnote{For ease of reference, a table summarising the Shannon entropy and cross-entropy measures as well as the Kullback-Liebler (KL) divergence is provided in Appendix~A.} $H(p)$, in that $\lim_{\alpha\to1}H_\alpha(p)=H(p)$. Similarly, the R\'{e}nyi divergence (of order $\alpha$) between two discrete distributions $p$ and $q$ with common finite support $\mathbb{S}$, given by $$D_\alpha(p||q)= \frac{1}{\alpha-1}\ln \sum_{x\in\mathbb{S}} p(x)^\alpha q(x)^{1-\alpha},$$ reduces to the KL divergence, $D(p\|q)$, as $\alpha \to 1$. 

Since the introduction of these measures, several other R\'{e}nyi-type information measures have been put forward, each obeying the condition that their limit as $\alpha$ goes to one reduces to a Shannon-type information measure (e.g., see \cite{mutualInfo} and the references therein for three different order $\alpha$ extensions of Shannon's mutual information due to Sibson, Arimoto and Csisz\'{a}r.)

Many of these definitions admit natural counterparts in the (absolutely) continuous case (i.e., when the involved distributions have a probability density function (pdf)), giving rise to information measures such as the R\'{e}nyi differential entropy for pdf $p$ with support $\mathbb{S}$, $$h_\alpha(p)=\frac{1}{1-\alpha}\ln \int_\mathbb{S}p(x)^\alpha \, dx,$$ and the R\'{e}nyi (differential) divergence between pdfs $p$ and $q$ with common support $\mathbb{S}$, $$D_\alpha(p||q)= \frac{1}{\alpha-1}\ln \int_\mathbb{S} p(x)^\alpha q(x)^{1-\alpha} \, dx.$$

The R\'{e}nyi cross-entropy between distributions $p$ and $q$ is an analogous generalization of the Shannon cross-entropy $H(p;q)$. Two definitions for this measure have been suggested. In \cite{rgan}, mirroring the fact that Shannon's cross-entropy satisfies $H(p;q)=D(p\|q)+H(p)$, the authors define R\'{e}nyi cross-entropy as 
\begin{align}\label{rgan-ce} \tilde H_\alpha(p;q):=D_\alpha(p||q)+H_\alpha (p).
\end{align}
In contrast, prior to \cite{rgan}, the authors of \cite{Alba} introduced the R\'{e}nyi cross-entropy in their study of the  so-called shifted R\'{e}nyi measures (expressed as the logarithm of weighted generalized power means). Specifically, upon simplifying Definition~6 in~\cite{Alba}, their expression for the R\'{e}nyi cross-entropy between distributions $p$ and $q$ is  given by 
\begin{equation}\label{alba-ce}
H_\alpha\left(p;q\right):=\frac{1}{1-\alpha}\ln\sum_{x\in\mathbb{S}} p\left(x\right)q\left(x\right)^{\alpha-1}.    
\end{equation}
For the continuous case, the definition in \eqref{alba-ce} can be readily converted to yield the R\'{e}nyi differential cross-entropy between pdfs $p$ and $q$:
\begin{equation} \label{renyi-diff-ce}
h_\alpha(p;q):=\frac{1}{1-\alpha}\ln \int_\mathbb{S} p(x)q(x)^{\alpha-1} \, dx.
\end{equation}

As the R\'{e}nyi differential divergence and entropy were already calculated for numerous distributions in \cite{Gil} and \cite{Song}, respectively, determining the R\'{e}nyi differential cross-entropy using the definition in \eqref{rgan-ce} is straightforward. As such, this paper's focus is to establish closed-form expressions of the R\'{e}nyi differential cross-entropy as defined in \eqref{renyi-diff-ce} for various distributions, as well as to derive the R\'{e}nyi cross-entropy rate for two important classes of sources with memory, Gaussian and Markov sources.

Motivation for determining formulae for the R\'{e}nyi cross-entropy extends beyond idle curiosity. The Shannon differential cross-entropy was used as a loss function for the design of deep learning generative adversarial networks (GANs) \cite{Goodfellow2014}. Recently, the R\'{e}nyi differential cross-entropy measures in \eqref{renyi-diff-ce} and \eqref{rgan-ce}, were used in \cite{paper,Bhatia} and \cite{rgan}, respectively, to generalize the original GAN loss function.
It is shown that in \cite{paper} and \cite{Bhatia} that the resulting R\'{e}nyi-centric generalized loss function
preserves the equilibrium point satisfied by the original GAN based on the Jensen-R\'{e}nyi divergence \cite{kluza19}, a natural extension of the Jensen-Shannon divergence \cite{jsd}.
In \cite{rgan}, a different R\'{e}nyi-type generalized loss function is obtained and is shown to benefit from stability properties. Improved stability and system performance are shown in \cite{paper,Bhatia} and \cite{rgan} by virtue of the $\alpha$ parameter that can be judiciously used to fine-tune the adopted generalized loss functions which recover the original GAN loss function as $\alpha\to1$. 

The rest of this paper is organised as follows. In Section II, basic properties of the R\'{e}nyi cross-entropy are examined. In Section~III, the R\'{e}nyi differential cross-entropy for members of the exponential family is calculated. In Section IV, the R\'{e}nyi differential cross-entropy between two different distributions is obtained. In Section V, the R\'{e}nyi differential cross-entropy rate is derived for stationary Gaussian sources. Finally in Section~VI, the R\'{e}nyi cross-entropy rate is established for finite-alphabet time-invariant Markov sources.

\medskip

\section{Basic Properties of the R\'{e}nyi cross-entropy and differential cross-entropy}

For the R\'{e}nyi cross-entropy $H_\alpha(p;q)$ to deserve its name it would be preferable that it satisfies at least two key properties: it reduces to the R\'{e}nyi entropy when $p=q$ and its limit as $\alpha$ goes to one is the Shannon cross-entropy. Similarly, it is desirable that the R\'{e}nyi differential cross-entropy $h_\alpha(p;q)$ reduces to the R\'{e}nyi differential entropy when $p=q$ and its limit as $\alpha$ tends to one yields the Shannon differential cross-entropy. In both cases, the former property is trivial, and the latter property was proven in \cite{Bhatia} for the continuous case under some finiteness conditions (in the discrete case, the result holds directly via L'H\^{o}pital's rule).

It is also proven in \cite{Bhatia} that the R\'{e}nyi differential cross-entropy $h_\alpha(p;q)$ is non-increasing in $\alpha$ by showing that its derivative with respect to  $\alpha$ is non-positive. The same monotonicity property holds in the disrcrete case. 

Like its Shannon counterpart, the R\'{e}nyi cross-entropy is non-negative ($H_\alpha(p;q)\ge 0$); while the R\'{e}nyi differential cross-entropy can be negative.  This is easily verified when, for example, $\alpha=2$ and $p$ and $q$ are both Gaussian (normal) distributions with zero mean and variance $1/(8\sqrt{\pi})$, and parallels the same lack of non-negativity of the Shannon differential cross-entropy. 

We close this section by deriving the cross-entropy limit, $\lim_{\alpha \to \infty}H_\alpha(p;q)$.
To begin with, for any non-zero constant $\tilde{c}$, we have
\begin{align}
&\lim_{\alpha\to\infty}\frac{1}{1-\alpha}\ln \sum_{x\in\mathbb{S}} \tilde{c} q\left(x\right)^{\alpha-1} \nonumber\\
&=\lim_{\alpha\to\infty}\frac{1}{1-\alpha}\ln \tilde{c} + \lim_{\alpha\to\infty}\frac{1}{1-\alpha}\ln\sum_{x\in\mathbb{S}} q\left(x\right)^{\alpha-1} \nonumber\\
&=\lim_{\beta\to\infty}\frac{1-\beta}{-\beta}\frac{1}{1-\beta}\ln\sum_\mathbb{S} q\left(x\right)^{\beta} \qquad (\beta=\alpha-1) \nonumber\\
&= \lim_{\beta\to\infty}H_\beta (q)
= -\ln q_M, \label{max-limit} 
\end{align}
where $q_M:=\max_{x\in \mathbb{S}}q(x)$ and where we have used the fact that for the R\'{e}nyi entropy, $\lim_{\alpha\to\infty}H_\alpha(q)=-\ln q_M$.
Now, denoting the minimum and maximum values of $p(x)$ over $\mathbb{S}$ by $p_m$ and $p_M$, respectively, we have that for $\alpha>1$,
\begin{align*}
    \frac{1}{1-\alpha}\ln\sum_{x\in\mathbb{S}} p_m q\left(x\right)^{\alpha-1}&\leq\frac{1}{1-\alpha}\ln\sum_{x\in\mathbb{S}} p(x) q\left(x\right)^{\alpha-1}\\ 
    \text{and }\\
    \frac{1}{1-\alpha}\ln\sum_{x\in\mathbb{S}} p(x) q\left(x\right)^{\alpha-1}&\leq\frac{1}{1-\alpha}\ln\sum_{x\in\mathbb{S}} p_M q\left(x\right)^{\alpha-1}, 
\end{align*}
and hence by \eqref{max-limit} we obtain
\begin{equation}
\lim_{\alpha\to\infty} H_\alpha\left(p;q\right)= -\ln q_M.
\end{equation}

\medskip

\section{R\'{e}nyi Differential Cross-Entropy for Exponential Family Distributions}

A probability distribution on $\mathbb{R}$ or $\mathbb{R}^n$ with parameter $\theta$ is said to belong to the exponential family (e.g., see \cite{casella2002statistical}) if on its support $\mathbb{S}$ it admits a pdf of the form 
\begin{equation} \label{exp-f1}
f(x)=c (\theta)b (x)\exp \left(\eta(\theta)\cdot T (x)\right), \qquad x \in \mathbb{S},
\end{equation} 
for some real-valued (measurable) functions $c$, $b$, $\eta$ and $T$.\footnote{Note that $\theta$ and consequently $T(x)$ can be vectors in cases where the distribution admits multiple parameters.} Here $\eta$ is known as the natural parameter of the distribution, $T(x)$ is the sufficient statistic and $c(\theta)$ is the normalization constant in the sense that for all $\theta$ within the parameter space $$\int_\mathbb{S} b (x)\exp \left(\eta(\theta)\cdot T (x)\right) dx= c(\theta)^{-1}.$$
The pdf in \eqref{exp-f1} can also be written as 
\begin{equation} \label{exp-f2}
    \displaystyle{f(x) =b (x)\exp \left(\eta\cdot T (x)+A(\eta)\right)}, 
    \end{equation}
 where $A\left(\eta(\theta)\right)=\ln c(\theta) $. Examples of distributions in the exponential family include the Gaussian, Beta, and exponential distributions.

\begin{lemma}\label{lemma1}
Let $f_1(x)$ and $f_2(x)$ be pdfs of the same type in the exponential family with natural parameters $\eta_1$ and $\eta_2$, respectively. Define $f_h(x)$ as being of the same type as $f_1$ and $f_2$ but with natural parameter $\eta_h= \eta_1 + (\alpha-1)\eta_2$.  Then 
\begin{equation}
h_\alpha\left(f_1;f_2\right)=\frac{A\left(\eta_1\right)-A\left(\eta_h\right)+\ln E_h}{1-\alpha}-A\left(\eta_2\right), \label{eq:three}
\end{equation}
where $E_h= \mathbb{E}_{f_h}\left[b(X)^{\alpha-1}\right]=\int b(x)^{\alpha-1}f_h(x) \, dx$ 
\end{lemma}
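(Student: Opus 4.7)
The plan is to substitute the exponential-family form \eqref{exp-f2} directly into the definition \eqref{renyi-diff-ce} of $h_\alpha(f_1;f_2)$ and then rearrange the integrand so that $f_h$ emerges as a factor, turning the integral into the expectation $E_h$.

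First I would write
\begin{align*}
f_1(x)\,f_2(x)^{\alpha-1} &= b(x)\exp\!\bigl(\eta_1\cdot T(x)+A(\eta_1)\bigr)\\
&\quad\cdot b(x)^{\alpha-1}\exp\!\bigl((\alpha-1)\eta_2\cdot T(x)+(\alpha-1)A(\eta_2)\bigr),
\end{align*}
collect the $b$-factors into $b(x)^{\alpha}$, and combine the exponents of $T(x)$ into $\eta_h\cdot T(x)$ using $\eta_h=\eta_1+(\alpha-1)\eta_2$. This gives
\[
f_1(x)f_2(x)^{\alpha-1}=b(x)^{\alpha}\exp\!\bigl(\eta_h\cdot T(x)\bigr)\cdot e^{A(\eta_1)+(\alpha-1)A(\eta_2)}.
\]

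The key algebraic step is to recognize that $b(x)\exp(\eta_h\cdot T(x))=f_h(x)\,e^{-A(\eta_h)}$ by the defining form \eqref{exp-f2} of $f_h$. Factoring one $b(x)$ out of $b(x)^\alpha$ and making this substitution yields
\[
f_1(x)f_2(x)^{\alpha-1}= b(x)^{\alpha-1}f_h(x)\,\exp\!\bigl(A(\eta_1)+(\alpha-1)A(\eta_2)-A(\eta_h)\bigr).
\]
Integrating over $\mathbb{S}$ pulls the constant out and leaves precisely $E_h=\int b(x)^{\alpha-1}f_h(x)\,dx$.

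Taking logarithms and dividing by $1-\alpha$ gives
\[
h_\alpha(f_1;f_2)=\frac{A(\eta_1)-A(\eta_h)+\ln E_h}{1-\alpha}+\frac{(\alpha-1)A(\eta_2)}{1-\alpha},
\]
and the second term simplifies to $-A(\eta_2)$, matching \eqref{eq:three}. There is no real obstacle here beyond careful bookkeeping: the only non-obvious move is to split $b(x)^\alpha\exp(\eta_h\cdot T(x))$ as $b(x)^{\alpha-1}\cdot[b(x)\exp(\eta_h\cdot T(x))]$ so that the bracketed factor can be recognized as $f_h(x)$ up to the normalizing constant $e^{-A(\eta_h)}$. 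One mild caveat worth flagging is that $f_h$ must itself be a valid pdf in the family, i.e.\ $\eta_h$ must lie in the natural parameter space; this is implicit in the hypothesis that $f_h$ is ``of the same type'' as $f_1,f_2$.
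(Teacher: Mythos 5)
Your proposal is correct and follows essentially the same route as the paper: substitute the form \eqref{exp-f2}, combine exponents via $\eta_h=\eta_1+(\alpha-1)\eta_2$, peel off one factor of $b(x)$ to recognize $f_h$, and integrate to obtain $E_h$ times a constant. The caveat you flag about $\eta_h$ lying in the natural parameter space is a reasonable observation that the paper leaves implicit.
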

\begin{proof}
Using \eqref{exp-f2}, we have
\begin{align*}
f_1&\left(x\right)f_2\left(x\right)^{\alpha-1}\\
&=b\left(x\right)\exp\big(\eta_1\cdot T\left(x\right)+A\left(\eta_1\right)\big)\\&\hspace{0.15 in}\cdot\bigg(b\left(x\right)\exp\big(\eta_2\cdot T\left(x\right)+A\left(\eta_2\right)\big)\bigg)^{\alpha-1}\\
&=b\left(x\right)^\alpha \exp\left((\eta_1+(\alpha-1)\eta_2)\cdot T\left(x\right)\right)\\&\hspace{0.15 in}\cdot\exp\left(A\left(\eta_1\right)+\left(\alpha-1\right)A\left(\eta_2\right)\right)\\
&=b\left(x\right)^\alpha \exp\left(\eta_h\cdot T\left(x\right)+A\left(\eta_h\right)\right)\\&\hspace{0.15 in}\cdot\exp\left(A\left(\eta_1\right)+\left(\alpha-1\right)A\left(\eta_2\right)-A\left(\eta_h\right)\right)\\
&=b\left(x\right)^{\alpha-1}f_h\left(x\right)\exp\left(A\left(\eta_1\right)+\left(\alpha-1\right)A\left(\eta_2\right)-A\left(\eta_h\right)\right).
\end{align*}
Thus,
\begin{align*}
  \int_{\mathbb{S}} &f_1\left(x\right)f_2\left(x\right)^{\alpha-1}dx\\&=\int_{\mathbb{S}}b\left(x\right)^{\alpha-1}f_h\left(x\right)dx\\&\hspace{0.15 in}\cdot\exp\left(A\left(\eta_1\right)+\left(\alpha-1\right)A\left(\eta_2\right)-A\left(\eta_h\right)\right)\\
    &=\exp\left(A\left(\eta_1\right)+\left(\alpha-1\right)A\left(\eta_2\right)-A\left(\eta_h\right)\right)E_h,
\end{align*}
and therefore,
$$h_\alpha\left(f_1;f_2\right) =\frac{A\left(\eta_1\right)-A\left(\eta_h\right)+\ln E_h}{1-\alpha}-A\left(\eta_2\right).$$
\end{proof}

\smallskip

\begin{remark} 
If $b(x)=b$ is a constant for all $x \in \mathbb{S}$, then 
$$\frac{\ln E_h}{1-\alpha}= -\ln b.$$ In many cases, we have that $b(x)= 1$ on $\mathbb{S}$, and thus the $\frac{\ln E_h}{1-\alpha}$ term disappears in~\eqref{eq:three}.
\end{remark}

\bigskip
Table \ref{RenDivTable} lists R\'{e}nyi differential cross-entropy expressions we derived using Lemma~\ref{lemma1} for some common distributions in the exponential family (which we describe in Appendix~B for convenience). In the table, the subscript of $i$ is used to denote that a parameter belongs to pdf $f_i$, $i=1,2$. 

\medskip

\begin{table}[hbtp]\caption{R\'{e}nyi Differential Cross-Entropies for Common Continuous Distributions}\label{RenDivTable} 
\begin{center}
\begin{tabular}{|c|c|} 

\hline
\textbf{Name} & $h_\alpha(f_1;f_2)$ \\ \hline
\hline

\textbf{Beta} & $\displaystyle{\ln {B(a_2, b_2)} +  \frac{1}{\alpha - 1}\ln \frac{B(a_h, b_h)}{B(a_1, b_1)}} $\\\cline{2-2} 
& $a_h := a_1 + (\alpha-1)(a_2-1)$,\\& $b_h := b_1 + (\alpha-1)(b_2-1)$ \\ 

\hline
$\boldsymbol{\chi^2}$ & $\displaystyle{\frac{1}{1-\alpha}\left( \frac{\nu_1}{2}\ln\left(\alpha\right)-\ln\Gamma\left(\frac{\nu_1}{2}\right)+\ln\Gamma\left(\frac{\nu_h}{2}\right)\right)}$\\
&$+\displaystyle{\frac{2-\nu_2}{2}\ln\left(\alpha \right)+\ln2\Gamma\left(\frac{\nu_2}{2}\right)} $  \\\cline{2-2} 
	& $\nu_h := \nu_1 + (\alpha-1)(k-2)$\\

\hline

\textbf{Exponential} & $\displaystyle{\frac{1}{1-\alpha}\ln \frac{\lambda_i}{\lambda_h} -\ln \lambda_2 }$  \\\cline{2-2} 
& $\lambda_h := \lambda_1 + (\alpha-1)\lambda_2$ \\

\hline 
\textbf{Gamma} & $\displaystyle{\ln \Gamma(k_2)+ k_2\ln\theta_2}$\\
& $+ \displaystyle{ \frac{1}{1-\alpha}\left( \ln\frac{\Gamma(k_h)}{\Gamma(k_1)}-k_h\ln\theta_h-k_1\ln\theta_1 \right) } $ \\\cline{2-2} 
&  $\theta_\alpha^* := \frac{\theta_1 + (a-1)\theta_2}{(\alpha-1)\theta_1\theta_1}$, \ $k_h := k_i + (\alpha-1)k_2$  \\ 
\hline 
\hspace{-8pt}
 \textbf{Gaussian}
 & $\displaystyle{\frac{1}{2}\left(\ln (2\pi\sigma_2^2) +  \frac{1}{1-\alpha}\ln \left( \frac{\sigma_2^2}{(\sigma^2)_h^*}\right) + \frac{(\mu_1 - \mu_2^2}{ (\sigma^2)_h^*} \right)} $\\\cline{2-2} 
& $(\sigma^2)_h^* := \sigma_2^2 + (\alpha-1)\sigma_1^2$ \\

\hline
$\begin{array}{c}
\textbf{Laplace}\\
\end{array}$
& $\displaystyle{\ln (2 b_2) +  \frac{1}{1-\alpha}\ln\left(\frac{b_2}{2b_h}\right)}$
\\\cline{2-2} 
\boldmath{($\mu_1=\mu_2$)}
 & $\displaystyle{b_h:=b_2+(1-\alpha)b_1}$\\
\hline
\end{tabular}

\end{center}
\end{table}

\section{R\'{e}nyi differential Cross-Entropy between different distributions}
Let $p$ and $q$ be pdfs with common support $\mathbb{S} \subseteq\mathbb{R}$. Below are some general formulae for the differential R\'{e}nyi cross-entropy between one specific (common) distribution and any general distribution. If $\mathbb{S}$ is an interval below, then $|\mathbb{S}|$ denotes its length.

\subsection{Distribution $q$ is uniform}

Let $q$ be uniformly distributed on $\mathbb{S}$. Then
\begin{align*}
    h_\alpha(p;q)&=\frac{1}{1-\alpha}\ln \int_\mathbb{S} p(x)q(x)^{\alpha-1}dx
    = \ln |\mathbb{S}|.
\end{align*}

\subsection{Distribution $p$ is uniform}

Now suppose $p$ is uniformly distributed on $\mathbb{S}$. Then
\begin{align*}
    h_\alpha(p;q)&=\frac{1}{1-\alpha}\ln \int_\mathbb{S} p(x)q(x)^{\alpha-1}dx\\
    &=\frac{1}{1-\alpha}\ln \frac{1}{|\mathbb{S}|} -h_{\alpha-1}(q).
\end{align*}

\subsection{Distribution $q$ is exponentially distributed}

Suppose the $\mathbb{S}=\mathbb{R}^+$ and $q$ is exponential with parameter $\lambda$. Suppose also that the moment generating function (MGF) of $p$, $M_p(t)$ exists. 
We have 
\begin{align*}
    h_\alpha(p;q)&=\frac{1}{1-\alpha}\ln \int_\mathbb{S} p(x)q(x)^{\alpha-1}dx\\
    &= \frac{1}{1-\alpha}\ln \mathbb{E}_p\left[q(x)^{\alpha-1}\right]\\
    &=\frac{1}{1-\alpha}\ln \mathbb{E}_p\left[\left(\lambda \exp\left(-\lambda x\right)\right)^{\alpha-1}\right]\\
    &=-\ln\lambda +\frac{1}{1-\alpha}\ln M_p\left(\lambda(1-\alpha)\right).
\end{align*}

\subsection{Distribution $q$ is Gaussian}
Now assume that $q$ is a (normal) Gaussian $\mathcal{N}(\mu,\sigma^2)$ distribution and that the MGF of $Y:= (X-\mu)^2$, $M_Y$, exists, where $X$ is a random variable with distribution $p$. Then 
\begin{align*}
    h_\alpha(p;q) &= \frac{1}{1-\alpha}\ln \mathbb{E}_p\left[q(X)^{\alpha-1}\right]\\
    &=\frac{1}{1-\alpha}\ln \sigma(\sqrt{2\pi})^{1-\alpha}\mathbb{E} \left(\exp\left((1-\alpha)\frac{Y}{2\sigma^2}\right)\right)\\
    &= \ln\sigma\sqrt{2\pi} +\frac{1}{1-\alpha}\ln M_Y\left(\frac{1-\alpha}{2\sigma^2}\right).
\end{align*}
The case where $q$ is a half-normal distribution can be directly derived from the above. Given $q$ is a half-normal distribution, on its support its pdf is the same as that of a normal $\mathcal{N}(0,\sigma^2)$ distribution times 2. Hence if $p$'s support is $\mathbb{R}^+$, then $h_\alpha(p;q)=\ln\sigma\sqrt{\frac{\pi}{2}} +\frac{1}{1-\alpha}\ln M_Y\left(\frac{1-\alpha}{2\sigma^2}\right)$.

\section{R\'{e}nyi Differential Cross-Entropy Rate for Stationary Gaussian Processes}

\begin{lemma}\label{lemma2}
The R\'{e}nyi differential cross-entropy between two zero-mean multivariate dimension-$n$ Gaussian distributions with invertible covariance matrices $\Sigma_1$ and $\Sigma_2$, respectively, is given by 
\begin{equation}\label{multi-gauss}
h_\alpha(p;q)=\frac{\ln |\Sigma_1||S|}{2\alpha-2}+ \frac{1}{2}\ln |\Sigma_2| +\frac{n}{2}\ln 2\pi,    
\end{equation}
where $S:=\Sigma_1^{-1}+(\alpha-1)\Sigma_2^{-1}$.
\end{lemma}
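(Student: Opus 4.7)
The plan is to compute $h_\alpha(p;q)$ directly from the definition in \eqref{renyi-diff-ce}, using the standard multivariate Gaussian integral. Writing $c_i := (2\pi)^{-n/2}|\Sigma_i|^{-1/2}$ for $i=1,2$, the two densities are $p(x) = c_1 \exp(-\tfrac{1}{2}x^T\Sigma_1^{-1}x)$ and $q(x) = c_2 \exp(-\tfrac{1}{2}x^T\Sigma_2^{-1}x)$. Substituting into $p(x)q(x)^{\alpha-1}$ and collecting the two quadratic forms in the exponent gives
$$
p(x)q(x)^{\alpha-1} \;=\; c_1\, c_2^{\alpha-1}\,\exp\!\left(-\tfrac{1}{2}x^T S\, x\right),
$$
with $S := \Sigma_1^{-1}+(\alpha-1)\Sigma_2^{-1}$ exactly as in the statement. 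This is the key reduction: the integrand is an unnormalized Gaussian density with precision matrix $S$.

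Assuming $S$ is positive definite, I would next invoke the standard identity
$$
\int_{\mathbb{R}^n} \exp\!\left(-\tfrac{1}{2}x^T S\, x\right)\,dx \;=\; (2\pi)^{n/2}|S|^{-1/2},
$$
so that $\int p(x)q(x)^{\alpha-1}\,dx \;=\; c_1 c_2^{\alpha-1}(2\pi)^{n/2}|S|^{-1/2}$. Taking the logarithm, dividing by $1-\alpha$, and collecting the $|\Sigma_1|$, $|\Sigma_2|$, $|S|$ and $2\pi$ contributions then yields \eqref{multi-gauss}; in particular, the $(2\pi)^{-n(\alpha-1)/2}$ residue from $c_1 c_2^{\alpha-1}(2\pi)^{n/2}$ produces exactly the $\tfrac{n}{2}\ln 2\pi$ term after dividing by $1-\alpha$, while the three determinant factors telescope into $\tfrac{1}{2}\ln|\Sigma_2|$ plus $\tfrac{1}{2(\alpha-1)}\ln(|\Sigma_1||S|)$.

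The only real obstacle is ensuring that $S$ is positive definite, which is needed both for convergence of the Gaussian integral and for $\ln|S|$ to be well defined. For $\alpha\ge 1$ this is automatic since $\Sigma_1^{-1}$ and $\Sigma_2^{-1}$ are both positive definite; for $0<\alpha<1$ it must either be assumed explicitly or verified from the joint spectrum of $\Sigma_1^{-1}$ and $\Sigma_2^{-1}$. An alternative route that avoids rewriting the Gaussian integral is to invoke Lemma~\ref{lemma1} with the (matrix-valued) natural parameter $\eta = -\tfrac{1}{2}\Sigma^{-1}$, sufficient statistic $T(x)=xx^T$, $b(x)=1$, and $A(\eta) = -\tfrac{1}{2}\ln|\Sigma| - \tfrac{n}{2}\ln 2\pi$; then $\eta_h$ corresponds precisely to $-\tfrac{1}{2}S$, the factor $E_h$ equals $1$ (since $b\equiv 1$), and substituting into \eqref{eq:three} reproduces \eqref{multi-gauss} after elementary simplification.
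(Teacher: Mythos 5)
Your proposal is correct, and your primary route differs from the paper's. The paper proves Lemma~\ref{lemma2} by specializing Lemma~\ref{lemma1}: it identifies the zero-mean multivariate Gaussian as an exponential-family member with $b(\textbf{x})=(2\pi)^{-n/2}$ and $A(\eta)=\frac{1}{2}\ln|-2\eta|$, then substitutes into \eqref{eq:three}. Your main argument instead evaluates $\int p(x)q(x)^{\alpha-1}\,dx$ directly by completing the quadratic form to get an unnormalized Gaussian with precision matrix $S$ and applying the standard integral $\int\exp(-\tfrac{1}{2}x^TSx)\,dx=(2\pi)^{n/2}|S|^{-1/2}$; the bookkeeping of the $(2\pi)$ and determinant factors checks out and reproduces \eqref{multi-gauss} exactly. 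This route is more elementary and self-contained (no reliance on Lemma~\ref{lemma1}), at the cost of redoing by hand the algebra that the exponential-family machinery packages. Your closing sketch of the alternative via Lemma~\ref{lemma1} is essentially the paper's proof, and is in fact stated more carefully: you write the natural parameter as $-\tfrac{1}{2}\Sigma^{-1}$ paired with the statistic $xx^T$, whereas the paper writes $\eta=\tfrac{1}{2}\Sigma^{-1}$ and $T(\textbf{x})=\textbf{x}$, which is a sign/statistic slip. You also explicitly flag that $S$ must be positive definite for the integral to converge (automatic for $\alpha\ge 1$, an extra hypothesis for $0<\alpha<1$), a point the paper passes over silently.
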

\begin{proof}
Recall that the pdf of a multivariate Gaussian with mean $\textbf{0}= (0,0..., 0)^T$ and invertible covariance matrix $\Sigma$ is given by:
$$f(\textbf{x})= \frac{\exp(\frac{-1}{2}\textbf{x}^T\Sigma^{-1}\textbf{x})}{(2\pi)^{k/2}|\Sigma|^{1/2}} $$
for $\textbf{x} \in \mathbb{R}^n$. Note that this distribution is a member of the exponential family, where $T(\textbf{x})=\textbf{x}$, $\eta = \frac{1}{2}\Sigma^{-1}$, $A(\eta)=\frac{1}{2}\ln |-2\eta|$ and $b(\textbf{x})=(2\pi)^{\frac{-n}{2}}$. Hence the R\'{e}nyi differential cross-entropy between two zero-mean multivariate Gaussian distributions with covariance matrices $\Sigma_1$ and $\Sigma_2$, respectively, is 
\begin{align*}
h_\alpha(p;q)&= 
    \frac{1}{1-\alpha}\left(\frac{1}{2}\ln \left|2\frac{\Sigma_1^{-1}}{2}\right| \right. \\
    & \qquad \left. -\frac{1}{2}\ln \left|2\frac{\Sigma_1^{-1}+(\alpha-1)\Sigma_2^{-1}}{2}\right| \right)\\ 
    & \qquad-\frac{1}{2}\ln \left|2\frac{\Sigma_2^{-1}}{2}\right| -\ln (2\pi)^{\frac{-n}{2}} \\
    &= \frac{\ln |\Sigma_1||S|}{2\alpha-2} + \frac{1}{2}\ln |\Sigma_2| +\frac{n}{2}\ln 2\pi.
\end{align*}
\end{proof}

\medskip

Let $\{X_j\}_{j=1}^\infty$ and $\{Y_j\}_{j=1}^\infty$ be stationary zero-mean Gaussian processes. For a given $n$, $X^n :=(X_1, X_2,...,X_n)$ and $Y^n :=(Y_1, Y_2,...,Y_n)$ are multivariate Gaussian random variables with mean \textbf{0} and covariance matrices $\Sigma_{X^n}$ and $\Sigma_{Y^n}$, respectively. Since $\{X_j\}$ and $\{Y_j\}$ are stationary, their covariance matrices are Toeplitz. Furthermore, $B^n:=\Sigma_{Y^n}+(\alpha-1)\Sigma_{X^n}$ is Toeplitz.
\begin{lemma}
Let $\tilde{f}(\lambda)$, $\tilde{g}(\lambda)$ and $\tilde{h}(\lambda)$ be the power spectral densities of $\{X_j\}$,
$\{Y_j\}$ and the zero-mean Gaussian process with covariance matrix $B^n$, respectively.

Then the R\'{e}nyi differential cross-entropy rate between $\{X_j\}$ and $\{Y_j\}$, $\lim_{n\to\infty} \frac{1}{n}h_\alpha(X^n;Y^n)$, is given by $$\frac{\ln 2\pi}{2} 
+\frac{1}{4\pi(1-\alpha)}\int_0^{2\pi} \left[(2-\alpha)\ln \tilde{g}(\lambda) -\ln \tilde{h}(\lambda)\right] \,d\lambda.
$$
\end{lemma}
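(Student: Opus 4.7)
\smallskip

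The plan is to invoke Lemma~\ref{lemma2} for the finite-dimensional vectors $X^n$ and $Y^n$, then pass to the limit using Szeg\H{o}'s first limit theorem for Toeplitz determinants. With $p$ the law of $X^n$ and $q$ that of $Y^n$, Lemma~\ref{lemma2} gives
\[
h_\alpha(X^n;Y^n)=\frac{\ln |\Sigma_{X^n}||S_n|}{2\alpha-2}+\frac{1}{2}\ln|\Sigma_{Y^n}|+\frac{n}{2}\ln 2\pi,
\]
where $S_n=\Sigma_{X^n}^{-1}+(\alpha-1)\Sigma_{Y^n}^{-1}$. The catch is that $S_n$ is \emph{not} Toeplitz, so Szeg\H{o} cannot be applied to it directly; however, I can rewrite the offending determinant in terms of the Toeplitz matrix $B^n$ that the lemma has already introduced.

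The key algebraic step is the identity
\[
|\Sigma_{X^n}|\,|S_n|=|\,\Sigma_{X^n}\Sigma_{X^n}^{-1}+(\alpha-1)\Sigma_{X^n}\Sigma_{Y^n}^{-1}|=\frac{|B^n|}{|\Sigma_{Y^n}|},
\]
obtained by first absorbing $\Sigma_{X^n}$ into $S_n$ and then multiplying on the right by $\Sigma_{Y^n}$ inside the determinant. Substituting this back and dividing by $n$ yields
\[
\frac{1}{n}h_\alpha(X^n;Y^n)=\frac{1}{2(\alpha-1)}\cdot\frac{\ln|B^n|}{n}+\frac{\alpha-2}{2(\alpha-1)}\cdot\frac{\ln|\Sigma_{Y^n}|}{n}+\frac{\ln 2\pi}{2},
\]
which now involves only determinants of Toeplitz matrices.

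Next I apply Szeg\H{o}'s first limit theorem: for a sequence of Toeplitz matrices $T_n$ generated by a (positive, log-integrable) spectral density $\phi$, $\frac{1}{n}\ln |T_n|\to \frac{1}{2\pi}\int_0^{2\pi}\ln\phi(\lambda)\,d\lambda$. Applying this to $\Sigma_{Y^n}$ (symbol $\tilde g$) and to $B^n$ (symbol $\tilde h$, which is the assumed spectral density of the zero-mean Gaussian process with covariance $B^n$, or equivalently $\tilde h(\lambda)=\tilde g(\lambda)+(\alpha-1)\tilde f(\lambda)$), collecting the two integrals, and simplifying
\[
\frac{1}{2(\alpha-1)}-\frac{2-\alpha}{2(\alpha-1)}\cdot(-1)=\frac{1}{2(\alpha-1)},\qquad \frac{\alpha-2}{2(\alpha-1)}=-\frac{2-\alpha}{2(\alpha-1)},
\]
produces exactly the claimed expression $\frac{\ln 2\pi}{2}+\frac{1}{4\pi(1-\alpha)}\int_0^{2\pi}[(2-\alpha)\ln\tilde g(\lambda)-\ln\tilde h(\lambda)]\,d\lambda$.

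The main obstacle is justifying the application of Szeg\H{o}'s theorem: one needs to ensure that $\tilde f$, $\tilde g$ (and therefore $\tilde h=\tilde g+(\alpha-1)\tilde f$, at least when $\alpha>1$) are bounded away from zero and have integrable logarithm, otherwise the limit theorem can fail or the rate need not exist in the stated form. The determinantal manipulation itself is routine provided the covariance matrices are invertible, which is guaranteed by the hypotheses carried over from Lemma~\ref{lemma2}; the conceptual step that makes the proof work is recognising that $|\Sigma_{X^n}||S_n|$ collapses to $|B^n|/|\Sigma_{Y^n}|$, which converts a non-Toeplitz object into a ratio of Toeplitz determinants to which Szeg\H{o}'s theorem directly applies.
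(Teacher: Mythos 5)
Your proof is correct and follows essentially the same route as the paper: apply Lemma~\ref{lemma2}, collapse $|\Sigma_{X^n}||S_n|$ into $|B^n|/|\Sigma_{Y^n}|$, and invoke the Szeg\H{o}/Gray Toeplitz determinant limit theorem for $\Sigma_{Y^n}$ and $B^n$ (the paper cites \cite{Gray} for exactly this step, and your caveat about log-integrability of the spectral densities is the same hypothesis the paper imposes). The only blemish is that the first equation in your final ``simplifying'' display is garbled---$\frac{1}{2(\alpha-1)}+\frac{2-\alpha}{2(\alpha-1)}\neq\frac{1}{2(\alpha-1)}$ in general---but it plays no role, since the coefficients $\frac{1}{2(\alpha-1)}=\frac{-1}{2(1-\alpha)}$ and $\frac{\alpha-2}{2(\alpha-1)}=\frac{2-\alpha}{2(1-\alpha)}$ computed earlier already yield the claimed expression.
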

\begin{proof}
From Lemma~\ref{lemma2}, we first note that $S = \Sigma_{X^n}^{-1}B_n\Sigma_{Y^n}^{-1}$. With this in mind the R\'{e}nyi differential cross-entropy can be rewritten using \eqref{multi-gauss} as
\begin{align*}
&\frac{1}{n}\bigg(\frac{\ln |\Sigma_{X^n}||\Sigma_{X^n}^{-1}B^n\Sigma_{Y^n}^{-1}|}{2(\alpha-1)} + \frac{1}{2}\ln |\Sigma_{Y^n}| +\frac{n}{2}\ln 2\pi\bigg)
\\
&=\frac{\ln2\pi}{2} +\frac{1}{2n}\bigg(\frac{\ln |\Sigma_{X^n}||\Sigma_{X^n}^{-1}||B^n||\Sigma_{Y^n}^{-1}|}{(\alpha-1)} + \ln |\Sigma_{Y^n}|\bigg)\\
&=\frac{\ln2\pi}{2} +\frac{1}{2n}\bigg(\frac{\ln |B^n|-\ln|\Sigma_{Y^n}|}{(\alpha-1)} + \ln |\Sigma_{Y_n}|\bigg)\\
    &= \frac{\ln 2\pi}{2}+ \frac{1}{2n(1-\alpha)}\left(\left(2-\alpha\right)\ln|\Sigma_{Y^n}|-\ln|B^n|\right).
\end{align*}
It was proven in \cite{Gray} that for a sequence of Toeplitz matrices $T_n$ with spectral density $t(\lambda)$ such that $\ln t(\lambda)$ is Reimann integrable, one has $$\lim_{n\to\infty}\ln |T^n| = \frac{1}{2\pi}\int_0^{2\pi}\ln t(\lambda)\,d\lambda.$$
We therefore obtain that the R\'{e}nyi differential cross-entropy rate is given by
\begin{align*}
    \frac{\ln 2\pi}{2}+ \frac{1}{4\pi(1-\alpha)}\int_0^{2\pi} \left[(2-\alpha)\ln \tilde{g}(\lambda) -\ln \tilde{h}(\lambda) \right] \,d\lambda.
\end{align*}
Note that $\tilde{h}(\lambda)= \tilde{g}(\lambda)+(\alpha-1)\tilde{f}(\lambda)$.
\end{proof}

\bigskip
 
\section{R\'{e}nyi cross-entropy rate for Markov sources}
Consider two time-invariant Markov sources $\{X_j\}_{j=1}^\infty$ and $\{Y_j\}_{j=1}^\infty$ with common finite alphabet $\mathbb{S}$ and with transition distribution $P(\cdot|\cdot)$ and $Q(\cdot|\cdot)$, respectively.
Then for any $i^n=(i_1,\ldots,i_n) \in \mathbb{S}^n$, their
$n$-dimensional joint distributions are given by
$$p^{(n)}(i^n)=P(i_n|i_{n-1})P(i_{n-1}|i_{n-2})...P(i_2|i_1)q(i_1)$$ and $$q^{(n)}(i^n)=Q(i_n|i_{n-1})Q(i_{n-1}|i_{n-2})...Q(i_2|i_1)p(i_1),$$
 respectively, with arbitrary initial distributions, $p(i_1)$ and $q(i_1)$, $i_1 \in \mathbb{S}$.
For simplicity, we assume that $p(i)$, $q(i),Q(j|i)>0$ for all $i,j\in \mathbb{S}$.
Define the R\'{e}nyi cross-entropy rate between $\{X_j\}$ and $\{Y_j\}$ as
\begin{align*}
  &\lim_{n\to \infty} \frac{1}{n}H_\alpha(X^n;Y^n) \\
  = &\lim_{n\to \infty} \frac{1}{n}\frac{1}{1-\alpha} \ln \left( \sum_{i^n\in \mathbb{S}^n}p^{(n)}(i^n)q^{(n)}(i^n)^{\alpha-1}\right).
\end{align*}
Note that by defining the matrix $R$ using the formula $$R_{ij}=P(j|i)Q(j|i)^{\alpha-1}$$ and the row vector $\textbf{s}$ as having components $s_i=p(i)q(i)^{\alpha-1}$, the R\'{e}nyi cross-entropy rate can be written as 
\begin{equation}\label{eq:markovtwo}
    \lim_{n\to \infty} \frac{1}{n}\frac{1}{1-\alpha}\ln \textbf{s} R^{n-1} \textbf{1},
\end{equation} where $\textbf{1}$ is a column vector whose dimension is the cardinatliy of the alphabet $\mathbb{S}$ and with all its entries equal to 1.

A result derived by \cite{rached2001renyi} for the R\'{e}nyi divergence between Markov sources can thus be used to find the R\'{e}nyi cross-entropy rate for Markov sources.
\begin{lemma}\label{lemma4}
Let $P$, $Q$, $\textbf{s}$ and $\textbf{R}$ be defined as above. If $R$ is irreducible, then 
\begin{equation}
\lim_{n\to \infty} \frac{1}{n}H_\alpha(X^n;Y^n)= \frac{\ln \lambda}{1-\alpha} \label{markov},
\end{equation}
where $\lambda$ is the largest positive eigenvalue of $R$.
\end{lemma}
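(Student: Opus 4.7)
The plan is to recognize expression \eqref{eq:markovtwo} as a normalized logarithm of a bilinear form in powers of a nonnegative matrix, and then extract its exponential growth rate via Perron--Frobenius theory, following exactly the template used by \cite{rached2001renyi} for the R\'enyi divergence rate between Markov sources.

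First I would record the structural facts that make Perron--Frobenius applicable. By the strict positivity assumption $p(i),q(i),Q(j|i)>0$ for all $i,j\in\mathbb{S}$, each entry of $R_{ij}=P(j|i)Q(j|i)^{\alpha-1}$ is nonnegative, and every component of $\mathbf{s}$ and $\mathbf{1}$ is strictly positive. Since $R$ is irreducible by hypothesis, the Perron--Frobenius theorem yields a simple Perron eigenvalue $\lambda>0$ equal to the spectral radius of $R$, together with strictly positive right and left eigenvectors $u$ and $v$ which we normalize so that $vu=1$.

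Next I would show that $\mathbf{s}R^{n-1}\mathbf{1}$ grows like $\lambda^{n-1}$. In the primitive case this follows immediately from the convergence $\lambda^{-(n-1)}R^{n-1}\to uv^{T}$, which combined with strict positivity of $\mathbf{s}$ and $\mathbf{1}$ gives the sharp asymptotic $\mathbf{s}R^{n-1}\mathbf{1}\sim (\mathbf{s}u)(v\mathbf{1})\lambda^{n-1}$ with prefactor $(\mathbf{s}u)(v\mathbf{1})>0$. Taking $\tfrac{1}{n}\ln$ and passing to the limit then gives
$$\lim_{n\to\infty}\frac{1}{n}\ln\bigl(\mathbf{s}R^{n-1}\mathbf{1}\bigr)=\ln\lambda,$$
and substituting into \eqref{eq:markovtwo} produces the claimed value $\frac{\ln\lambda}{1-\alpha}$.

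The main obstacle I anticipate is the possibly periodic (imprimitive) irreducible case, in which $\lambda^{-(n-1)}R^{n-1}$ need not converge pointwise. I would handle this by sandwiching: since $\mathbf{s}$ is componentwise bounded between two positive constants, $\mathbf{s}R^{n-1}\mathbf{1}$ is squeezed between constant multiples of $\mathbf{a}^{T}R^{n-1}\mathbf{b}$ for strictly positive $\mathbf{a},\mathbf{b}$, and Gelfand's formula $\lim_{n}\|R^{n}\|^{1/n}=\lambda$ together with the fact that for an irreducible nonnegative $R$ any such bilinear form $\mathbf{a}^{T}R^{n}\mathbf{b}$ is a constant-factor approximation of $\|R^{n}\|$ forces the logarithmic growth rate to be $\ln\lambda$. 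This is precisely the maneuver carried out in \cite{rached2001renyi} for the divergence setting, and it transports verbatim to the present bilinear form once one substitutes the matrix $R$ and the row vector $\mathbf{s}$ defined here for the analogous quantities in that reference.
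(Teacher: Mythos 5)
Your proposal is correct and rests on the same engine as the paper's proof: Perron--Frobenius theory for the irreducible non-negative matrix $R$, used to identify $\ln\lambda$ as the exponential growth rate of the bilinear form $\mathbf{s}R^{n-1}\mathbf{1}$ in \eqref{eq:markovtwo}. The difference is in how that growth rate is extracted. The paper sidesteps the primitive/imprimitive distinction entirely: it takes the strictly positive right Perron eigenvector $\mathbf{b}$ and squeezes $\mathbf{1}$ between $\mathbf{b}/b_M$ and $\mathbf{b}/b_m$, so that $\mathbf{s}R^{n-1}\mathbf{b}=\lambda^{n-1}\,\mathbf{s}\cdot\mathbf{b}$ is computed exactly and the two-sided bound $\lambda^{n-1}\mathbf{s}\cdot\mathbf{b}/b_M\le\mathbf{s}R^{n-1}\mathbf{1}\le\lambda^{n-1}\mathbf{s}\cdot\mathbf{b}/b_m$ immediately yields $\frac{1}{n}\ln\mathbf{s}R^{n-1}\mathbf{1}\to\ln\lambda$. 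Your route --- convergence of $\lambda^{-(n-1)}R^{n-1}$ to $uv^{T}$ in the primitive case, then Gelfand's formula plus norm-equivalence of positive bilinear forms to cover the periodic case --- is also valid (the positivity you invoke is available, since $p(i),q(i)>0$ makes $\mathbf{s}$ strictly positive), but it is heavier machinery for the same conclusion; the eigenvector sandwich handles both cases at once and needs nothing beyond the existence of a positive eigenvector with eigenvalue $\lambda$. One minor remark: strict positivity of $\mathbf{s}$ is not actually required for the sandwich --- non-negativity together with $\mathbf{s}\neq 0$ suffices, because $\mathbf{b}>0$ already guarantees $\mathbf{s}\cdot\mathbf{b}>0$.
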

\begin{proof}
Since the non-negative matrix $R$ is irreducible, by the Frobenius theorem (e.g., cf.\ \cite{seneta2006non, gallager1996}), it has a largest positive eigenvalue $\lambda$ with associated positive eigenvector $\textbf{b}$. Let $b_m$ and $b_M$ be the minimum and maximum elements, respectively, of $\textbf{b}$. Then due to the non-negativity of $\textbf{s}$, $$\lambda^{n-1}\textbf{s}\cdot \textbf{b}= \textbf{s}R^{n-1}\textbf{b}\leq \textbf{s}R^{n-1}\textbf{1}b_M,$$ where $\cdot$ denotes the Euclidean inner product. Similarly, $\lambda^{n-1}\textbf{s}\cdot \textbf{b} \geq \textbf{s}R^{n-1}\textbf{1}b_m.$ As a result,
$$\frac{1} {n} \ln \frac{\lambda^{n-1}\textbf{s}\cdot \textbf{b}}{b_M}\leq \frac{1} {n} \ln \textbf{s}R^{n-1}\textbf{1} \leq \frac{1} {n}\ln \frac{\lambda^{n-1}\textbf{s}\cdot \textbf{b}}{b_m}.$$ 
Note that for all $n$, $\frac{\textbf{s}\cdot \textbf{b}}{b_M}$ is a constant. Thus
\begin{align*}
    \lim_{n\to\infty} \frac{1} {n} \ln \frac{\lambda^{n-1}\textbf{s}\cdot \textbf{b}}{b_M}&=
   \lim_{n\to\infty} \frac{n-1}{n}\ln \lambda + \lim_{n\to\infty}\frac{1} {n} \ln \frac{\textbf{s}\cdot \textbf{b}}{b_M}\\
    &=\ln \lambda.
\end{align*}
Similarly, we have
$\lim_{n\to\infty}\frac{1} {n}\ln \frac{\lambda^{n-1}\textbf{s}\cdot \textbf{b}}{b_m}= \ln \lambda.$
Hence, 
$$\lim_{n\to \infty} \frac{1}{n}H_\alpha(X^n;Y^n)=
\lim_{n\to\infty}\frac{1} {n}\ln \frac{\lambda^{n-1}\textbf{s}\cdot \textbf{b}}{(1-\alpha)b_m}=\frac{\ln \lambda}{1-\alpha}.$$

\end{proof}

Another technique can be borrowed from \cite{rached2001renyi} to generalize Lemma~\ref{lemma4} to the case where $R$ is reducible. First $R$ is rewritten in the canonical form detailed in Proposition 1 of \cite{rached2001renyi}. Let $\lambda_k$ be the largest positive eigenvalue of each self-communicating sub-matrix of $R$, and let $\tilde{\lambda}$ be the maximum of these $\lambda_k$'s. For each inessential class $C_i$, let $\lambda_j$ be the largest positive eigenvalue of the sub-matrix of each class $C_j$ that is reachable from $C_i$, and let $\lambda^\dagger$ be the maximum of these $\lambda_j$'s. Define $\lambda = \max\{\tilde{\lambda},\lambda^\dagger\}$. Then~\eqref{markov} holds.


\section*{Appendix A: Shannon-type information measures}\label{app-a}

\begin{table}[h]
\begin{center}
\begin{tabular}{|c|c|} 

\hline
\textbf{Name} & Definition \\
\hline\hline

$\begin{array}{c}
\textbf{Shannon}\\
\textbf{Entropy}\\
\end{array}$ & $\displaystyle{H(p)=-\sum_{x\in\mathbb{S}}p(x)\ln p(x)} $\\
\hline
$\begin{array}{c}
\textbf{Shannon}\\ \textbf{Differential}\\
\textbf{Entropy}\\
\end{array}$& $\displaystyle{h(p)=-\int_{\mathbb{S}}p(x)\ln p(x) \, dx} $\\
\hline

$\begin{array}{c}
\textbf{Shannon}\\
\textbf{Cross-Entropy}\\
\end{array}$& $\displaystyle{H(p;q)=-\sum_{x\in\mathbb{S}}p(x)\ln q(x)} $\\
\hline
$\begin{array}{c}
\textbf{Shannon}\\ \textbf{Differential}\\
\textbf{Cross-Entropy}\\
\end{array}$
 & $\displaystyle{h(p;q)=-\int_{\mathbb{S}}p(x)\ln q(x) dx} $\\
\hline
$\begin{array}{c}
\textbf{KL Divergence,}\\
\textbf{(Discrete)}\\
\end{array}$ & $\displaystyle{D(p\|q)=-\sum_{x\in\mathbb{S}}p(x)\ln \frac{p(x)}{q(x)}} $\\
\hline
$\begin{array}{c}
\textbf{KL Divergence,}\\
\textbf{(Continuous)}\\
\end{array}$
& $\displaystyle{D(p\|q)=-\int_{\mathbb{S}}p(x)\ln \frac{p(x)}{q(x)} \, dx} $\\
\hline

\end{tabular}

\end{center}
\end{table}


\newpage
\section*{Appendix B: Distributions listed in Table \ref{RenDivTable}}\label{app-b}

\begin{table}[h]
\begin{center}
\begin{tabular}{|c|c|} 
\hline
\textbf{Name} & \textbf{PDF} $f(x)$ \\
\textit{(Parameters)} & (Support) \\\hline
\hline
\textbf{Beta} & $\displaystyle{{B(a, b)}x^{a-1}(1-x)^{b-1} } $\\
\textit{($a>0$, $b>0$)} & $\mathbb{S}=(0,1)$\\
\hline
$\boldsymbol{\chi^2}$ & $\displaystyle{ \dfrac{1}{2^{\frac \nu 2} \Gamma\left(\frac \nu 2 \right)}x^{\frac \nu 2 -1}e^{-\frac x 2}}$\\
\textit{($\nu \in \mathbb{Z}^+$)} & $\mathbb{S}=\mathbb{R^+}$\\
\hline 
\textbf{Exponential} & $\displaystyle{\lambda e^{-\lambda x}}$  \\
\textit{($\lambda>0$)} & $\mathbb{S}=\mathbb{R^+}$\\
\hline 
\textbf{Gamma} & $\displaystyle{\dfrac{1}{\theta^k\Gamma\left(k\right)}x^{k -1}e^{-\frac k \theta}}$\\
\textit{($k>0$, $\theta>0$)} & $\mathbb{S}=\mathbb{R^+}$\\
\hline 
 \textbf{Gaussian}
 & $\displaystyle{\frac{1}{ \sqrt{2\pi\sigma^2} } e^{-\frac{1}{2}\left(\frac{x-\mu}{\sigma}\right)^2}} $\\
\textit{($\mu$, $\sigma^2>0$)} & $\mathbb{S}=\mathbb{R}$\\
\hline
 \textbf{Laplace}
& $\displaystyle{\frac{1}{2b} e^{-\frac{|x-\mu|}{b}}}$
\\
\textit{($\mu$, $b^2>0$)} & $\mathbb{S}=\mathbb{R}$\\
\hline
\end{tabular}

\end{center}
\end{table}

\smallskip\noindent
{\bf Notes}

\begin{itemize}
    \item ${\displaystyle \mathrm {B}(a,b)=\int _{0}^{1}t^{a-1}(1-t)^{b-1}\,dt}$ is the Beta function.
    \item ${\displaystyle \Gamma(z)=\int _{0}^{\infty}x^{z-1}e^{-x}\,dx}$ is the Gamma function.
\end{itemize}

\bibliographystyle{IEEEtran}
\bibliography{citations}

\end{document}